\numberwithin{equation}{section}
\newtheorem{theorem}{Theorem}[section]
\newtheorem{proposition}[theorem]{Proposition}
\newtheorem{corollary}[theorem]{Corollary}
\theoremstyle{definition}
\newtheorem{remark}[theorem]{Remark}
\newtheorem{example}[theorem]{Example}
\newcommand{\lii}{[\![}
\newcommand{\rii}{]\!]}
\begin{document}

\title{On the generalized Hamming weights of hyperbolic codes}
\author{Eduardo Camps-Moreno}
\address[Eduardo Camps-Moreno]{Escuela Superior de F\'isica y Matem\'aticas \\ Instituto Polit\'ecnico Nacional\\ Mexico City, Mexico}
\email{camps@esfm.ipn.mx}
\author{Ignacio Garc\'ia-Marco}
\address[Ignacio Garc\'ia-Marco]{Departamento de Matem\'aticas, Estad\'istica e I.O. \\ Universidad de La Laguna \\ La Laguna, Tenerife, Spain}
\email{iggarcia@ull.edu.es}
\author{Hiram H. L\'opez}
\address[Hiram H. L\'opez]{Department of Mathematics and Statistics\\ Cleveland State University\\
Cleveland, OH USA}
\email{h.lopezvaldez@csuohio.edu}
\author{Irene M\'arquez-Corbella}
\address[Irene M\'arquez-Corbella]{Departamento de Matem\'aticas, Estad\'istica e I.O. \\ Universidad de La Laguna \\ La Laguna, Tenerife, Spain}
\email{imarquec@ull.edu.es}
\author{Edgar Mart\'inez-Moro}
\address[Edgar Mart\'inez-Moro]{Institute of Mathematics\\ University of Valladolid\\ Spain}
\email{edgar.martinez@uva.es}
\author{Eliseo Sarmiento}
\address[Eliseo Sarmiento]{Escuela Superior de F\'isica y Matem\'aticas \\ Instituto Polit\'ecnico Nacional\\ Mexico City, Mexico}
\email{esarmiento@ipn.mx}
\thanks{The first and sixth authors were partially supported by SIP-IPN, project 20211838, and CONACyT. The third author was partially supported by NSF DMS-2201094. The second and fourth authors were partially supported by the Spanish MICINN PID2019-105896GB-I00 and MASCA (ULL Research Project). The fourth and fifth authors were supported in part by Grant TED2021-130358B-I00 funded by
MCIN/AEI/10.13039/501100011033 and by the ``European Union
NextGenerationEU/PRTR''. (Corresponding author: Hiram H. L\'opez.)}
\keywords{Reed-Muller codes, evaluation codes, hyperbolic codes, generalized Hamming weights, footprint}
\subjclass[2010]{Primary 94B05; Secondary 13P25, 14G50, 11T71}
\begin{abstract}
A hyperbolic code is an evaluation code that improves a Reed-Muller because the dimension increases while the minimum distance is not penalized. We give the necessary and sufficient conditions, based on the basic parameters of the Reed-Muller, to determine whether a Reed-Muller coincides with a hyperbolic code. Given a hyperbolic code, we find the largest Reed-Muller containing the hyperbolic code and the smallest Reed-Muller in the hyperbolic code. We then prove that similarly to Reed-Muller and Cartesian codes, the $r$-th generalized Hamming weight and the $r$-th footprint of the hyperbolic code coincide. Unlike Reed-Muller and Cartesian, determining the $r$-th footprint of a hyperbolic code is still an open problem. We give upper and lower bounds for the $r$-th footprint of a hyperbolic code that, sometimes, are sharp.
\end{abstract}
\dedicatory{Dedicated to Joachim Rosenthal on the occasion of his sixtieth birthday.\\
We thank Prof. Rosenthal for his selfless, continuous, and endless \\
support to shape the coding theory and cryptography community.}
\maketitle
\markleft{E. Camps-Moreno, I. Garc\'ia-Marco, H. H. L\'opez, I. M\'arquez-Corbella, E. Mart\'inez-Moro, and E. Sarmiento}

\section{Introduction}
Let $\mathbb F_q$ be a finite field with $q$ elements, where $q$ is a power of a prime.
An $[n,k,\delta]$ {\it linear code} $\mathcal C$ over $\mathbb{F}_q$ is a subspace $\mathcal{C} \subseteq \mathbb{F}_q^n$ with $\mathbb{F}_q$-dimension $k$ and minimum distance $\delta := \min \{ d_H(\mathbf{c},\mathbf{c}') \ : \ \mathbf{c},\mathbf{c}' \in \mathcal C, \mathbf{c} \neq \mathbf{c}' \},$ where $d_H(\cdot,\cdot)$ denotes the {\it Hamming distance}.

The Generalized Hamming weights (GHWs) for linear codes, a natural generalization of the minimum distance, were introduced by Wei in 1992~\cite{Wey}. Wei showed in the same work~\cite{Wey} that the GHWs completely characterize the performance of a linear code when used on the wire-tap channel of type II. The GHWs are also related to resilient functions and trellis, or branch complexity of linear codes~\cite{TsVl}. The precise definition is the following. For a nonnegative integer $s$, we set $[s] := \{1, 2, ..., s\}$.
The support of a subspace $\mathcal D \subseteq \mathbb F_q^n$ is defined by
$ \chi(\mathcal D) := \left\{i \in [n] \ : \ \text{there is } \mathbf (x_1,\ldots, x_n) \text{ in } \mathcal D \text{ with } x_i \neq 0\right\}.$
For an integer $1 \leq r \leq k$, the $r$-th {\it generalized Hamming weight} of $\mathcal C$ is given by
\[\delta_r(\mathcal C) := \min \{ \ |\chi(\mathcal D)| \ : \ \mathcal D \subseteq \mathcal C, \dim (\mathcal D) = r \}.\]
Note that $\delta_1(\mathcal C)$ is the minimum distance of $C$.

This work will focus on evaluation codes whose evaluation points are the points in $\mathcal P:= \mathbb F_q^m$.
For $A\subseteq \mathbb N^m$, let
$\mathbb F_q[A]$ be the subspace of polynomials in $\mathbb F_q[\mathbf X] := \mathbb F_q[X_1, \ldots, X_m]$ with $\mathbb F_q$-basis 
$\left\{ \mathbf{X}^{\mathbf i} := X_1^{i_1} \cdots X_m^{i_m} \ : \ \mathbf i = (i_1, \ldots, i_m) \in A\right\}.$
Write $ \mathcal P = \{P_1, \ldots, P_n\}$, where $n := |\mathcal P| = q^m.$ Define the following evaluation map
\[\begin{array}{cccc}
\mathrm{ev}_{\mathcal P}: & \mathbb F_q[X_1, \ldots, X_m] & \longrightarrow & \mathbb F_q^n\\
& f & \longmapsto & (f(P_1), \ldots, f(P_n)).
\end{array}\]
The {\it evaluation or monomial code} associated with $A$ is denoted and defined by
\[\mathcal C_A := \mathrm{ev}_{\mathcal P} (\mathbb F_q[A]) = \left\{\mathrm{ev}_{\mathcal P}(f) \ : \ f\in \mathbb F_q[A] \right\}.\]
For $a, b \in \mathbb R$ and $a \leq b$, we denote by $\lii a, b \rii$ the integer interval $[a,b] \cap \mathbb Z$.
Recall $A\subseteq \mathbb N^m.$ As $\alpha^q=\alpha$ for every $\alpha\in \mathbb F_q$, one can find a unique set $B\subseteq \lii 0, q-1\rii ^m$ such that $\mathcal C_A = \mathcal C_B$. In what follows, if a set $A\subseteq \mathbb N^m$ defines the code $\mathcal C_A$, we are assuming that $A \subseteq \lii 0, q-1\rii ^m$.

Observe that the length and dimension of the evaluation code $\mathcal C_A$ are $q^m$ and $|A|$, respectively. The minimum distance of $\mathcal C_A$ has been studied in terms of the footprint that we now define.
The \emph{footprint-bound} of the evaluation code $\mathcal C_A$ is the integer
\[ \mathrm{FB}(\mathcal C_A) :=  \min \left\{ \prod_{j=1}^m (q-i_j) \ : \ (i_1, \ldots, i_m) \in A \right\}.\]
The footprint-bound matters because the minimum distance $\delta_1(\mathcal C_A)$ of $\mathcal C_A$ is lower bounded by the footprint bound~\cite{GH00}:
$\mathrm{FB}(\mathcal C_A) \leq \delta_1(\mathcal C_A)$.
The footprint-bound has been extensively studied in the literature. See, for example, \cite{BeelenP, Car18, G08b, JVV} and the references therein. 

The families of Reed-Muller and hyperbolic codes that we describe below are particular cases of evaluation codes. Let $s \geq 0, m \geq 1$ be integers and take
\[R := \left\{\mathbf i = (i_1, \ldots, i_m) \in \lii 0,q-1 \rii^m \ : \ \sum_{j=1}^m i_j \leq s \right\}.\] The evaluation code $\mathcal C_{R}$, denoted by $\mathrm{RM}_q(s,m)$, is called \emph{Reed-Muller code} over $\mathbb{F}_q$ of degree $s$ with $m$ variables.

A hyperbolic code, introduced by Geil and H\o holdt in \cite{GH01}, is an evaluation code designed to improve the dimension of a Reed-Muller code while the minimum distance is not penalized. The precise definition of a hyperbolic code is the following. Let $d, m \geq 1$ be integers and take
\[H := \left\{ \mathbf i = (i_1, \ldots, i_m) \in \lii 0,q-1\rii^m \ : \ \prod_{j=1}^m (q-i_j) \geq d \right\}.\]
The evaluation code $\mathcal C_H$, denoted by $\mathrm{Hyp}_q(d,m)$, is called the \emph{hyperbolic code} over $\mathbb{F}_q$ of order $d$ with $m$ variables.

Note that the hyperbolic code $\mathrm{Hyp}_q(d,m)$ has been designed to be the code with the largest possible dimension among those monomial codes $\mathcal C_A$ such that $\mathrm{FB}(\mathcal C_A)\geq d$. There are instances where the hyperbolic codes improve the Reed-Muller codes, meaning that the dimension has increased~\cite{HLP}. But sometimes, the hyperbolic and Reed-Muller codes coincide. In this paper, we give the necessary and sufficient conditions to determine whether a Reed-Muller code is hyperbolic; those conditions are provided in terms of the basic parameters of the Reed-Muller code. Given a hyperbolic code, we find the largest (respectively smallest) Reed-Muller code contained in (respectively that contains) the hyperbolic.

The GHWs have been studied for many well-known families of codes. Heijnen and Pellikaan introduced in~\cite{HeiPel}, in a general setting, the order bound on GHWs of codes on varieties to compute the GHWs of Reed-Muller codes. Beelen and Datta used a similar approach of the order bound in~\cite{BEELEN2018130} to calculate the GHWs of Cartesian codes. Jaramillo et al. introduced in~\cite{JVV} the $r$-th footprint to bound the GHWs for any evaluation code. This paper proves that the $r$-th generalized Hamming weight and the $r$-th footprint of a hyperbolic code coincide.

The outline of this paper is as follows. In Section~\ref{secC1}, we determine when a Reed-Muller code is hyperbolic. Thus, we indicate when the hyperbolic code of order $d$ has a greater dimension concerning a Reed-Muller code with the same minimum distance. 
Given a hyperbolic code $\mathrm{Hyp}_q(d,m),$ we find in Section~\ref{secC2} the smallest Reed-Muller code $\mathrm{RM}_q(s^\prime,m)$ that contains $\mathrm{Hyp}_q(d,m).$ Section~\ref{secC4} finds the largest Reed-Muller code $\mathrm{RM}_q(s,m)$ contained in $\mathrm{Hyp}_q(d,m).$ In other words, Section~\ref{secC2} and~\ref{secC4} find the largest $s$ and the smallest $s^\prime$ such that
\[ \mathrm{RM}_q(s,m) \subseteq \mathrm{Hyp}_q(d,m) \subseteq \mathrm{RM}_q(s^\prime,m).\]

In Section~\ref{21.04.04}, we prove that similarly to Reed-Muller and Cartesian codes, the $r$-th generalized Hamming weight and the $r$-th footprint of the hyperbolic code coincide. Unlike Reed-Muller and Cartesian, determining the $r$-th footprint of a hyperbolic code is still an open problem. We use the results from Sections~\ref{secC1},~\ref{secC2}, and~\ref{secC4} to provide upper and lower bounds for the $r$-th footprint of a hyperbolic code that, sometimes, are sharp.

\section{When hyperbolic and Reed-Muller codes coincide}\label{secC1}
We determine in this section when a Reed-Muller code is hyperbolic. In other words, for a Reed-Muller code $\mathrm{RM}_q(s,m),$ we give necessary and sufficient conditions over $q$, $s,$ $m$ and its minimum distance to determine if $\mathrm{RM}_q(s,m)$ is a hyperbolic code.

\begin{proposition}\label{20.08.10}
Assume $s=mt+r$, where $t,r \in \mathbb{N}$ and $0\leq r\leq m-1.$ Then
\[\max\left\{\prod_{j=1}^m (q-i_j)  \ : \  \sum_{j=1}^m i_j=s,\ 0\leq i_j\leq q-1\right\}=(q-t-1)^r(q-t)^{m-r}.\]
\end{proposition}
\begin{proof}
Consider $\mathbf{i}=(i_1,\ldots,i_m)$ such that $\prod_{j=1}^m (q-i_j)$ reaches the maximum value. If all the $i_j$'s are equal, then $i_j=\frac{s}{m}$ and we have the result ($r=0$). If they are not equal, we can assume by symmetry that $i_1>i_2$, and then we would have that
$$(q-i_1+1)(q-i_2-1)\prod_{j=3}^{m}(q-i_j)-\prod_{j=1}^m (q-i_j)>0$$

\noindent if and only if $i_1-i_2-1>0$. Since we have chosen $\mathbf{i}$ to be maximum, then $i_1-i_2-1=0$ and therefore $i_1=i_2+1$. This means that $i_1=\ldots=i_r=t+1$ and $i_{r+1}=\ldots=i_m=t$ for some $r$ and $t$ (and then
$s=mt+r$) and thus the conclusion follows.
\end{proof}
We come to one of the main results of this section.
\begin{theorem}\label{20.08.16}
Let $m \geq 1$ and $0 \leq s < m(q-1)$. The Reed-Muller code $\mathrm{RM}_q(s,m)$ with minimum distance $\delta$ is a hyperbolic code if and only if
\[(q-t-1)^r(q-t)^{m-r}<\delta,\]
where $s+1=mt+r$ and $0\leq r < m.$ Even more, in this case we have $\mathrm{RM}_q(s,m)=\mathrm{Hyp}_q(\delta,m).$
\end{theorem}
\begin{proof}
Define the sets
\[R=\left\{ {\bf i}   \in \lii 0,q-1 \rii^m \ : \ \sum_{j=1}^m i_j\leq s \right\} \text{ and }
H=\left\{ {\bf i}  \in \lii 0,q-1 \rii^m  \ : \ \prod_{j=1}^m (q-i_j)\geq \delta \right\}.\]
As the minimum distance of the Reed-Muller code $\mathrm{RM}_q(s,m)$ is $\delta$ \cite[Theorem 3.9]{CLMS}, for every vector ${\bf i}=\left(i_1,\ldots,i_m \right)$ such that $\sum_{j=1}^m i_j\leq s$ we have that $\prod_{j=1}^m (q-i_j)\geq \delta.$ This implies that $R\subseteq H.$ Thus, by definition of hyperbolic code, the Reed-Muller code $\mathrm{RM}_q(s,m)$ is a hyperbolic code if and only $H\subseteq R.$ Define ${\bf i}=\left(i_1,\ldots,i_m \right)$ such that $\sum_{j=1}^m i_j \geq s+1=mt+r.$ By Proposition~\ref{20.08.10}, $\prod_{j=1}^m (q-i_j) \leq (q-t-1)^r(q-t)^{m-r}.$ We conclude $R=H$ if and only if $(q-t-1)^r(q-t)^{m-r}<\delta.$ In this case, we see that  $\mathcal C_H$ is the hyperbolic code $\mathrm{Hyp}_q(\delta,m).$
\end{proof}

Theorem~\ref{20.08.16} was previously proved in \cite[Proposition 2]{GMR} for the case when $m=2$. Even when $m=2$, we can observe that, in most nontrivial cases, the hyperbolic code outperforms the corresponding Reed-Muller code with the same minimum distance.
\begin{example}
Take $q = 9$. By Theorem~\ref{20.08.16}, we have the following inequality for the dimensions of the Reed-Muller and the hyperbolic codes
$$\dim (\mathrm{RM}_9(s,2)) < \dim (\mathrm{Hyp}_9(\delta,2))$$
for all $s \in \lii 5,13 \rii$. The dimensions are equal for $s \in \lii 0,4 \rii \cup \lii 14, 16 \rii$.
\end{example}

\begin{corollary}
    For $q=2$, Reed-Muller codes and hyperbolic codes are the same.
\end{corollary}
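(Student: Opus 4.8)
The plan is to prove both inclusions of families by a single direct computation, exploiting that when $q=2$ every coordinate $i_j$ of an exponent vector lies in $\{0,1\}$. The key observation is that for $\mathbf i = (i_1,\dots,i_m)\in\lii 0,1\rii^m$ one has $2-i_j = 2$ when $i_j=0$ and $2-i_j=1$ when $i_j=1$, so
\[
\prod_{j=1}^m (2-i_j) \;=\; 2^{\,\#\{\,j\,:\,i_j=0\,\}} \;=\; 2^{\,m-\sum_{j=1}^m i_j}\;=\;2^{\,m-\mathrm{w}_H(\mathbf i)},
\]
because for a $0/1$ vector $\sum_j i_j$ is exactly its Hamming weight.

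Using this, I would rewrite the set defining a hyperbolic code: for $d\ge 1$ the condition $\prod_{j=1}^m (2-i_j)\ge d$ is equivalent to $2^{m-\mathrm{w}_H(\mathbf i)}\ge d$, hence to $\mathrm{w}_H(\mathbf i)\le m-\lceil\log_2 d\rceil$ since the weight is an integer. Thus
\[
H=\Big\{\mathbf i\in\lii 0,1\rii^m \;\Big|\; \sum_{j=1}^m i_j\le m-\lceil\log_2 d\rceil\Big\},
\]
which is precisely the index set defining $\mathrm{RM}_2(s,m)$ with $s=m-\lceil\log_2 d\rceil$; therefore $\mathrm{Hyp}_2(d,m)=\mathrm{RM}_2(m-\lceil\log_2 d\rceil,m)$, so every binary hyperbolic code is a binary Reed-Muller code. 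For the converse I run the computation backwards: a Reed-Muller code $\mathrm{RM}_2(s,m)$ with $s\ge m$ is the whole space $\mathbb F_2^{2^m}=\mathrm{Hyp}_2(1,m)$, while for $0\le s\le m$ one puts $d=2^{m-s}\ge 1$ and gets $m-\lceil\log_2 d\rceil=s$, so $\mathrm{RM}_2(s,m)=\mathrm{Hyp}_2(2^{m-s},m)$.

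Alternatively the converse is immediate from Theorem~\ref{20.08.16}: writing $s+1=mt+r$ with $0\le r<m$ and $0\le s<m$ forces $t\in\{0,1\}$, and $(q-t-1)^r(q-t)^{m-r}=(1-t)^r(2-t)^{m-r}$ equals $2^{m-s-1}$ when $t=0$ and $1$ when $t=1$, both strictly below the minimum distance of $\mathrm{RM}_2(s,m)$ computed from Proposition~\ref{21.04.02}. I do not expect a genuine obstacle: the only delicate point is the passage from $2^{m-\mathrm{w}_H(\mathbf i)}\ge d$ to the integer inequality $\mathrm{w}_H(\mathbf i)\le m-\lceil\log_2 d\rceil$, together with a check of the boundary cases $s=0$ (the repetition code, corresponding to $d=2^m$) and $s=m$ (the full space, corresponding to $d=1$).
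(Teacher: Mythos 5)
Your proof is correct, and its primary route is genuinely different from the paper's. You establish the set identity directly: for $q=2$ one has $\prod_{j=1}^m(2-i_j)=2^{m-\mathrm{w}_H(\mathbf i)}$ on $\lii 0,1\rii^m$, so the hyperbolic condition $\prod_j(2-i_j)\geq d$ becomes the integer inequality $\sum_j i_j\leq m-\lceil\log_2 d\rceil$, yielding the explicit dictionary $\mathrm{Hyp}_2(d,m)=\mathrm{RM}_2(m-\lceil\log_2 d\rceil,m)$ and $\mathrm{RM}_2(s,m)=\mathrm{Hyp}_2(2^{m-s},m)$. This is more elementary and self-contained than the paper's argument, which instead invokes Theorem~\ref{20.08.16} (and hence Proposition~\ref{20.08.10}): the paper writes $s+1=m\cdot 0+r$ for $s+1<m$, checks $2^{m-r}=2^{m-s-1}<2^{m-s}=\delta(\mathrm{RM}_2(s,m))$, and concludes $\mathrm{RM}_2(s,m)=\mathrm{Hyp}_2(2^{m-s},m)$. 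Your direct computation buys two things: it proves both inclusions of families explicitly (every binary hyperbolic code is a Reed--Muller code, not just the converse), and it covers the boundary degrees $s=m-1$ and $s\geq m$ that the paper's proof leaves implicit (the paper only treats $s+1<m$, i.e.\ $t=0$). Your alternative argument for the converse via Theorem~\ref{20.08.16}, with $t\in\{0,1\}$ and the values $2^{m-s-1}$ resp.\ $1$ compared against $\delta$ from Proposition~\ref{21.04.02}, is essentially the paper's proof, extended to the case $t=1$. The only caveat, of no real consequence, is the degenerate range $d>2^m$, where $\mathrm{Hyp}_2(d,m)$ is the zero code and $m-\lceil\log_2 d\rceil<0$ falls outside the Reed--Muller family as defined; otherwise the passage from $2^{m-\mathrm{w}_H(\mathbf i)}\geq d$ to $\mathrm{w}_H(\mathbf i)\leq m-\lceil\log_2 d\rceil$ is sound since $m-\mathrm{w}_H(\mathbf i)$ is an integer.
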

\begin{proof}
Take $s+1 < m(q-1).$ Then $s+1=m(0)+r,$ with $0\leq r\leq m-1$. We know that
\[2^{m-r}<2^{m-r+1}=2^{m-s}=\delta(\mathrm{RM}_q(s,m)).\]
By Theorem~\ref{20.08.16} we have $\mathrm{RM}_q(s,m)=\mathrm{Hyp}_q(2^{m-s},m).$
\end{proof}

\section{The smallest Reed-Muller code}\label{secC2}
Given the hyperbolic code of degree $d$ with $m$ variables $\mathrm{Hyp}_q(d,m)$,
we now find the smallest degree $s$ such that $\mathrm{Hyp}_q(d,m) \subseteq \mathrm{RM}_q(s,m)$. We will use the following notation. The symbol $\left \lfloor a \right \rfloor$ denotes the integer part of the real number $a,$ which is the nearest and smaller integer of $a,$ and $\{a\}$ is the fractional part of $a,$ defined by the formula $\{a\} = a - \left \lfloor a \right \rfloor.$ 

We start with $m = 2,$ the case of two variables.
\begin{proposition}
\label{Prop:smallerRM-m=2}
Given $d\in \mathbb N,$ define $a=q-\sqrt{d}$ and $s=\left \lfloor 2a \right \rfloor.$ Then $\mathrm{Hyp}_q(d,2) \subseteq \mathrm{RM}_q(s,2)$. Moreover, $s$ is the smallest integer with this property, that is
\[\mathrm{Hyp}_q(d,2) \not\subseteq \mathrm{RM}_q(s-1,2).\]
\end{proposition}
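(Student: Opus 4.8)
The plan is to unwind both containments into statements about the integer exponent vectors $\mathbf i = (i_1, i_2) \in \lii 0, q-1 \rii^2$ that define the two codes, and then turn everything into an inequality about the sum $i_1 + i_2$. Recall that $\mathrm{Hyp}_q(d,2)$ is defined by the set $H = \{\mathbf i \mid (q-i_1)(q-i_2) \ge d\}$ and $\mathrm{RM}_q(s,2)$ by $R_s = \{\mathbf i \mid i_1 + i_2 \le s\}$. Since $\mathcal C_A \subseteq \mathcal C_B$ iff $A \subseteq B$ for monomial codes, the containment $\mathrm{Hyp}_q(d,2) \subseteq \mathrm{RM}_q(s,2)$ is equivalent to: every $\mathbf i$ with $(q-i_1)(q-i_2) \ge d$ satisfies $i_1 + i_2 \le s$. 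Equivalently, the maximum of $i_1 + i_2$ over all $\mathbf i \in H$ is at most $s$, and the smallest such $s$ is exactly that maximum.

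So the core task is to compute $\sigma(d) := \max\{ i_1 + i_2 \mid 0 \le i_1, i_2 \le q-1,\ (q-i_1)(q-i_2) \ge d \}$ and show it equals $\lfloor 2a \rfloor$ with $a = q - \sqrt d$. First I would substitute $u_j = q - i_j$, so $u_j$ ranges over $\lii 1, q \rii$ and the constraint becomes $u_1 u_2 \ge d$; maximizing $i_1 + i_2 = 2q - (u_1 + u_2)$ amounts to \emph{minimizing} $u_1 + u_2$ subject to $u_1 u_2 \ge d$ with $u_j$ integers in $\lii 1, q\rii$. By AM–GM the continuous minimum of $u_1 + u_2$ under $u_1 u_2 = d$ is $2\sqrt d$, attained at $u_1 = u_2 = \sqrt d$; one then has to argue that the integer minimum of $u_1 + u_2$ subject to $u_1 u_2 \ge d$ equals $\lceil 2\sqrt d \rceil$ — intuitively, given a target sum $S$, there exist integers $u_1 + u_2 = S$ with $u_1 u_2 \ge d$ iff $\lfloor S/2 \rfloor \lceil S/2 \rceil \ge d$, and the least such $S$ is $\lceil 2\sqrt d\rceil$. (One should also check the range condition $u_j \le q$ is automatically satisfied because $d \le q^2$, so $\sqrt d \le q$; if $d$ forces one factor above $q$ there is a trivial edge case to dispose of, but in the relevant regime $d \le (q-1)^2$-ish this does not occur, and the footprint of the hyperbolic code handles degenerate $d$.) Then $\sigma(d) = 2q - \lceil 2\sqrt d \rceil$, and the final step is the elementary identity $2q - \lceil 2\sqrt d \rceil = \lfloor 2q - 2\sqrt d \rfloor = \lfloor 2a \rfloor$, valid since $2q$ is an integer.

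Having established $\sigma(d) = \lfloor 2a \rfloor = s$, the two assertions follow at once: $H \subseteq R_s$ gives $\mathrm{Hyp}_q(d,2) \subseteq \mathrm{RM}_q(s,2)$, and because the maximum $\sigma(d) = s$ is actually attained by some $\mathbf i^\star \in H$, that $\mathbf i^\star$ has $i_1^\star + i_2^\star = s > s-1$, so $\mathbf i^\star \notin R_{s-1}$, witnessing $\mathrm{Hyp}_q(d,2) \not\subseteq \mathrm{RM}_q(s-1,2)$. The main obstacle I anticipate is the purely number-theoretic claim that the integer minimum of $u_1 + u_2$ under $u_1 u_2 \ge d$ is $\lceil 2\sqrt d\rceil$: one direction (no sum below $\lceil 2\sqrt d\rceil$ works, since $u_1 u_2 \le (u_1+u_2)^2/4 < d$ when $u_1 + u_2 < 2\sqrt d$) is immediate from AM–GM, but the achievability direction requires exhibiting the balanced pair $u_1 = \lfloor \lceil 2\sqrt d\rceil/2\rfloor$, $u_2 = \lceil \lceil 2\sqrt d\rceil/2\rceil$ and verifying $u_1 u_2 \ge d$, which is a short but slightly fiddly case split on the parity of $\lceil 2\sqrt d\rceil$ together with the fact that $d$ is an integer. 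Everything else is bookkeeping with floors and the substitution.
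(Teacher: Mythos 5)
Your proposal is correct and follows essentially the same route as the paper: the containment is the same AM--GM argument (in your $u_j=q-i_j$ coordinates), and your "balanced pair" witness $\bigl(\lfloor S/2\rfloor,\lceil S/2\rceil\bigr)$ with $S=\lceil 2\sqrt d\,\rceil$, together with the parity case split using integrality of $d$, is exactly the paper's witness $(\lfloor a\rfloor,\lfloor a\rfloor)$ or $(\lfloor a\rfloor,\lfloor a\rfloor+1)$ with its case split on the fractional part of $a$. The fiddly achievability step you flag does go through as you describe (for $S$ odd, $d\le(k+\tfrac12)^2$ and $d\in\mathbb Z$ give $d\le k(k+1)$), so there is no gap.
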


\begin{proof}
Let $H, R_1, R_2 \subseteq \lii 0,q-1 \rii^m$ be the sets  defining the codes $\mathrm{Hyp}_q(d,2)$, $\mathrm{RM}_q(s,2)$ and $\mathrm{RM}_q(s-1,2)$, respectively.
We show first that $\mathrm{Hyp}_q(d,2) \subseteq \mathrm{RM}_q(s,2)$. We will use the following fact:
\begin{equation}
\min \{a_1+a_2 \ : \ a_1,a_2 \in \mathbb{R}_{\geq 0}, a_1 a_2 = d \} = 2\sqrt{d}.
\label{eq:optimization-Problem}
\end{equation}
For every $\mathbf{i}=({i}_1,{i}_2)\in \mathbb N^2$ such that $\mathbf{i}\in H,$ we have that $(q-{i}_1)(q-{i}_2) \geq d$. By Eq. \eqref{eq:optimization-Problem},
$(q-{i}_1)+(q-{i}_2) \geq 2\sqrt{d}$, i.e. ${i}_1+{i}_2 \leq 2a$. Moreover,  since ${i}_1,{i}_2\in \mathbb N$, then ${i}_1+{i}_2\leq \left \lfloor 2 a\right \rfloor = s$. Thus $\mathbf{i} \in R_1,$ which proves the first statement.

We show now that $\mathrm{Hyp}_q(d,2) \not\subseteq \mathrm{RM}(s-1,2)$. We separate it into two cases.

Case $0\leq \{a\} < \frac{1}{2}$. Take $\mathbf a = (\left \lfloor a \right \rfloor, \left \lfloor a \right \rfloor)\in \mathbb N^2$. As $(q-\left \lfloor a \right \rfloor)^2 \geq (q-a)^2 = d,$ ${\mathbf a}$ belongs to $H$. Observe that $\left \lfloor a \right \rfloor + \left \lfloor a \right \rfloor = 2\left \lfloor a \right \rfloor = \left \lfloor 2a \right \rfloor = s > s-1,$ thus $\mathbf a \notin R_2.$
    \item Case $\frac{1}{2}\leq \{ a \} < 1$. Take $\mathbf a = (\left \lfloor a \right \rfloor, \left \lfloor a \right \rfloor + 1)\in \mathbb N^2.$ Since $\{a\} \geq \frac{1}{2}$, then $q-a = q - \left \lfloor a \right \rfloor - \{a\} \leq q-\left \lfloor a \right \rfloor -\frac{1}{2}$. Thus, the equation $ \left(q-\left \lfloor a \right \rfloor - \frac{1}{2}\right)^2 = (q-\left \lfloor a\right \rfloor)(q-\left \lfloor a \right \rfloor-1) + \frac{1}{4},$ implies that $ (q-\left \lfloor a\right \rfloor)(q-\left \lfloor a \right \rfloor-1) = \left \lfloor \left(q-\left \lfloor a \right \rfloor - \frac{1}{2}\right)^2 \right \rfloor \geq \left \lfloor (q-a)^2\right \rfloor = \left \lfloor d \right \rfloor = d.$ Which means that $\mathbf a$ belongs to $H.$ As $\left \lfloor a \right \rfloor + \left \lfloor a \right \rfloor +1  = 2\left \lfloor a \right \rfloor +1 = \left \lfloor 2a \right \rfloor = s > s-1,$ we have that $\mathbf a \notin R_2.$
\end{proof}

The trivial generalization to $m$ variables of Proposition \ref{Prop:smallerRM-m=2} is not valid. That is, in general, it is not true that the code $\mathrm{RM}_q(s,m)$ is the smallest Reed-Muller code that contains the hyperbolic code $\mathrm{Hyp}_q(d,m),$ where $a=q-\sqrt[m]{d}$ and $s=\left \lfloor m a\right \rfloor$, check the example below.
\begin{example}
Take $q=27$, $m=3$ and $d=37$. Then $a=q-\sqrt[3]{d}=27-\sqrt[3]{37}\approx 23.66,$ and $s=\left \lfloor 3 a \right \rfloor = 71.$ It is computationally easy to check that if $i_1, i_2$ and $i_3$ are integers such that $(q-i_1)(q-i_2)(q-i_3)\geq 37,$ then $i_1+i_2+i_3 \leq 70.$ Thus $\mathrm{Hyp}_q(d,m) \subseteq \mathrm{RM}_q(s-1, m).$
\end{example}
We have the following result as the first generalization of Proposition \ref{Prop:smallerRM-m=2}.
\begin{proposition}
Given $d\in \mathbb N,$ define $a=q-\sqrt[m]{d}$ and $s=\left \lfloor ma\right \rfloor.$ Then $\mathrm{Hyp}_q(d,m)\subseteq \mathrm{RM}_q(s,m).$
Moreover, $s$ is the smallest integer with this property if $\{a\}< \frac{1}{m}$.  
\end{proposition}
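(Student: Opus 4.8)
The plan is to mirror the proof of Proposition~\ref{Prop:smallerRM-m=2}, replacing the two-variable optimization identity~\eqref{eq:optimization-Problem} by the $m$-variable AM--GM inequality.

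For the inclusion $\mathrm{Hyp}_q(d,m)\subseteq\mathrm{RM}_q(s,m)$, let $H,R\subseteq\lii 0,q-1\rii^m$ be the sets defining the two codes. If $\mathbf i=(i_1,\dots,i_m)\in H$ then $\prod_{j=1}^m(q-i_j)\ge d$, so AM--GM applied to the positive numbers $q-i_1,\dots,q-i_m$ gives $\sum_{j=1}^m(q-i_j)\ge m\bigl(\prod_{j=1}^m(q-i_j)\bigr)^{1/m}\ge m\sqrt[m]{d}$. Hence $\sum_{j=1}^m i_j\le mq-m\sqrt[m]{d}=ma$, and since the left-hand side is an integer, $\sum_{j=1}^m i_j\le\lfloor ma\rfloor=s$, that is $\mathbf i\in R$; this proves $H\subseteq R$.

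For minimality, assume $\{a\}<\frac{1}{m}$; I would exhibit one monomial exponent lying in $H$ but not in the set defining $\mathrm{RM}_q(s-1,m)$. The candidate is the ``balanced'' vector $\mathbf a=(\lfloor a\rfloor,\dots,\lfloor a\rfloor)$, which is a legitimate element of $\lii 0,q-1\rii^m$ since $0\le\lfloor a\rfloor\le q-1$. It belongs to $H$ because $q-\lfloor a\rfloor\ge q-a=\sqrt[m]{d}$ forces $\prod_{j=1}^m(q-\lfloor a\rfloor)=(q-\lfloor a\rfloor)^m\ge d$. Its coordinate sum is $m\lfloor a\rfloor$, and the hypothesis $\{a\}<\frac{1}{m}$ gives $0\le m\{a\}<1$, hence $s=\lfloor ma\rfloor=\lfloor m\lfloor a\rfloor+m\{a\}\rfloor=m\lfloor a\rfloor$. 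Thus $\mathbf a$ has coordinate sum $m\lfloor a\rfloor=s>s-1$, so $\mathbf a$ is not in the defining set of $\mathrm{RM}_q(s-1,m)$, which gives $\mathrm{Hyp}_q(d,m)\not\subseteq\mathrm{RM}_q(s-1,m)$.

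I do not expect a serious obstacle: the argument hinges on the elementary equivalence between $\{a\}<\frac{1}{m}$ and $\lfloor ma\rfloor=m\lfloor a\rfloor$, which is exactly what makes the balanced vector a certificate of non-inclusion. The subtlety worth flagging is that the hypothesis is genuinely needed: when $\{a\}\ge\frac{1}{m}$ the balanced vector no longer has coordinate sum $s$, and, as the example $q=27$, $m=3$, $d=37$ above shows, the hyperbolic code can then actually sit inside $\mathrm{RM}_q(s-1,m)$; obtaining a sharp statement in the remaining range would require a more delicate near-balanced integer point, analogous to the choice $(\lfloor a\rfloor,\lfloor a\rfloor+1)$ used for $\frac{1}{2}\le\{a\}<1$ when $m=2$.
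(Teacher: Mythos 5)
Your proof is correct and follows essentially the same route as the paper: the inclusion uses AM--GM, which is exactly the optimization fact $\min\{\sum_{j=1}^m a_j \mid a_j\ge 0,\ \prod_{j=1}^m a_j = d\}=m\sqrt[m]{d}$ invoked in the paper, and the minimality certificate is the same balanced vector $(\lfloor a\rfloor,\dots,\lfloor a\rfloor)$ together with the observation that $\{a\}<\tfrac{1}{m}$ forces $\lfloor ma\rfloor=m\lfloor a\rfloor$. Your closing remark about the hypothesis being genuinely needed matches the paper's subsequent example and Theorem~\ref{Prop:smaller-m}, which handle the remaining range with a near-balanced point.
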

\begin{proof}
Let $H \subseteq \lii 0,q-1 \rii^m$ be the set defining the hyperbolic code $\mathrm{Hyp}_q(d,m).$ We will use the following fact:
\begin{equation}
    \label{eq:optimization-Problem(m)}
    \min\left\{\sum_{j=1}^m a_j \ : \ a_j \in \mathbb{R}_{\geq 0} \hbox{ and } \prod_{j=1}^m a_j = d\right\} = m\sqrt[m]{d}.
\end{equation}
For every $\mathbf{i} = ({i}_1, \ldots, {i}_m) \in \mathbb N^m$ such that $\mathbf{i}\in H,$ we have that $\prod_{j=1}^m (q-{i}_j) \geq d$. By Equation \eqref{eq:optimization-Problem(m)}, we obtain 
$\sum_{j=1}^m (q-{i}_j) \geq m\sqrt[m]{d}$, i.e. $\sum_{j=1}^m {i}_j \leq m a.$ Since ${i}_j \in \mathbb N$ for $j\in \{1, \ldots, m\}$,
$\sum_{j=1}^m {i}_j \leq \left \lfloor m a\right \rfloor = s,$
which proves that $\mathrm{Hyp}_q(d,m) \subseteq \mathrm{RM}_q(s,m).$

Let $R \subseteq \lii 0,q-1 \rii^m$ be the set defining the code $\mathrm{RM}_q(s-1,m)$. If $\{a\}< \frac{1}{m},$ then $\left \lfloor m a\right \rfloor = m\left \lfloor a \right \rfloor$. Consider $\mathbf a = \left(\left \lfloor a\right \rfloor, \ldots, \left \lfloor a \right \rfloor \right)\in \mathbb N^m$. It is easy to check that ${\mathbf a} \in H,$ but ${\mathbf a}\notin R.$
\end{proof}
\begin{remark}
Given $d\in \mathbb N,$ take $a := q-\sqrt[m]{d}.$ Observe that $\mathrm{Hyp}_q(d,m)\not\subseteq \mathrm{RM}_q(m\lfloor a \rfloor-1, m).$ Indeed, let $H, R \subseteq \lii 0,q-1 \rii^m$ be the sets defining the codes $\mathrm{Hyp}_q(d,m)$ and $\mathrm{RM}_q(m \lfloor a \rfloor-1,m)$, respectively.  Consider $\mathbf a := \left(\left \lfloor a\right \rfloor, \ldots, \left \lfloor a \right \rfloor \right)\in \mathbb N^m$. It is easy to check that $\mathbf a \in H$, but $\mathbf a \notin R$.
\end{remark}
We now come to one of the main results of this section.
\begin{theorem}\label{Prop:smaller-m}
Given $d\in \mathbb N,$ define $a=q-\sqrt[m]{d}.$ Then $\mathrm{Hyp}_q(d,m)\subseteq \mathrm{RM}_q(s,m),$ where
\[\begin{array}{ccc}
s=m\left \lfloor a\right \rfloor + r & \hbox{ and }&
r=\left \lfloor \frac{m\log\left( \frac{q-a}{q-\left \lfloor a \right \rfloor}\right)}{\log \left( \frac{q-\left \lfloor a\right \rfloor -1}{q-\left \lfloor a \right \rfloor}\right)}\right \rfloor.
\end{array}\]
Even more, $s$ is the smallest integer with this property, that is
$$\mathrm{Hyp}_q(d,m) \not\subseteq \mathrm{RM}_q(s-1,m).$$
\end{theorem}

\begin{proof}
Let $H, R_1, R_2 \subset \lii 0,q-1 \rii^m$ be the sets defining the codes $\mathrm{Hyp}_q(d,m)$, $\mathrm{RM}_q(s,m)$ and $\mathrm{RM}_q(s-1,m)$, respectively. 
First note that, by the definition of $r$, we know that $r\in \{0,\ldots, m-1\}$ is the largest integer such that
$$(q-\left \lfloor a \right \rfloor)^{m-r} (q-\left \lfloor a \right \rfloor -1)^r \geq (q-a)^m = d.$$
Thus, if we consider 
$$\mathbf a = (\underbrace{\left \lfloor a \right \rfloor + 1, \ldots, \left \lfloor a \right \rfloor + 1}_{r}, \underbrace{\left \lfloor a \right \rfloor, \ldots, \left \lfloor a \right \rfloor}_{m-r} )\in \mathbb N^m,$$
then it is easy to check that ${\mathbf a}\in H$ but ${\mathbf a}\notin R_2$. Thus, $\mathrm{Hyp}_q(d,m) \not\subseteq \mathrm{RM}_q(s-1,m)$.

Let $\overline{R_1}$ be the complement of $R_1$ in $\lii 0,q-1 \rii^m$, {\it i.e.}
$$\overline{R_1} = \left\{{\mathbf{i}} \in \lii 0,q-1 \rii^m \ : \ \sum_{j=1}^m {i}_j \geq s+1\right\}.$$
We will show that $H\cap \overline{R_1} = \emptyset$. First note that the point $$\mathbf b = (\underbrace{\left \lfloor a \right \rfloor + 1, \ldots, \left \lfloor a \right \rfloor + 1}_{r+1}, \underbrace{\left \lfloor a \right \rfloor, \ldots, \left \lfloor a \right \rfloor}_{m-r-1} )\in \mathbb N^m,$$ satisfies that ${\mathbf b}\in \overline{R_1}$ but ${\mathbf b} \notin H$. A similar situation happens with any point obtained by a permutation of the entries of $\mathbf b$. Now let ${\mathbf i} = (i_1, \ldots, i_m) \in \overline{R_1}$ such that $\sum_{j=1}^m i_j = s+1$. If there exists an index $l$ such that $i_l > \left \lfloor a \right \rfloor +1$, there must be an index $\ell$ such that $i_\ell < \left \lfloor a \right \rfloor$. Take $\mathbf a_1 = \mathbf i - \mathbf e_l + \mathbf e_\ell,$ where $\mathbf e_i$ denotes the $i$-th standard vector in $\mathbb N^m.$ Define the function 
\[f(X_1, \ldots, X_m) = \prod_{j=1}^m (q-X_j).\]
It is easy to check that $f(\mathbf i) < f(\mathbf a_1)$. Indeed,
\begin{eqnarray*}
f(\mathbf i) < f(\mathbf a_1) & \Longleftrightarrow &
(q-i_l) (q-i_\ell) < (q-a_{1,l})(q-a_{1,\ell})\\
& \Longleftrightarrow & i_{\ell} < i_{l}-1.
\end{eqnarray*}
Now, if there exists again an index $l_2$ such that $a_{1,l_2}> \left \lfloor a \right \rfloor +1$, then there must exists an index $\ell_2$ such that $a_{1,\ell_2} < \left \lfloor a \right \rfloor$. Then we can repeat the process until we reach a permutation of the entries of the point $\mathbf b$. Thus, we get a set of points $\{\mathbf a_i\}_{i=1, \ldots, s}$ such that $f(\mathbf i) < f(\mathbf a_1) < \ldots < f(\mathbf a_s) < f(\mathbf b) < d$. That is ${\mathbf b}\in \overline{R_1},$ but ${\mathbf b}\notin H$.
\end{proof}
\begin{example}
The Reed-Muller $\mathrm{RM}_9(s,2)$ are hyperbolic codes for $s\leq 4$ and $s\geq 14$ by Theorem \ref{20.08.16}. 
\end{example}
We close this section with an example that shows the smallest Reed-Muller that contains a hyperbolic code.
\begin{example}\label{22.11.25}
The lattice points under the red curve of Figure~\ref{fig:sub1} define the hyperbolic code $\mathrm{Hyp}_9(27,2)$. By Theorem~\ref{Prop:smaller-m}, we have that $\mathrm{Hyp}_9(27,2)\subseteq \mathrm{RM}_9(s,2)$ when $s\geq 7$. The lattice points under the blue curve of Figure~\ref{fig:sub1} define the Reed-Muller hyperbolic code $\mathrm{RM}_9(7,2)$, which is the smallest Reed-Muller code that contains $\mathrm{Hyp}_9(27,2)$.
\end{example}

\begin{example}\label{22.11.26}
The lattice points under the red curve of Figure~\ref{fig:sub2} define the hyperbolic code $\mathrm{Hyp}_9(9,2)$. By Theorem~\ref{Prop:smaller-m}, we have that $\mathrm{Hyp}_9(9,2)\subseteq \mathrm{RM}_9(s,2)$ when $s\geq 12$. The lattice points under the blue curve of Figure~\ref{fig:sub2} define the Reed-Muller hyperbolic code $\mathrm{RM}_9(12,2)$, which is the smallest Reed-Muller code that contains $\mathrm{Hyp}_9(9,2)$.
\end{example}

\begin{figure}[h]
\centering
\begin{subfigure}{.4\textwidth}
\centering
  \includegraphics[scale=0.6]{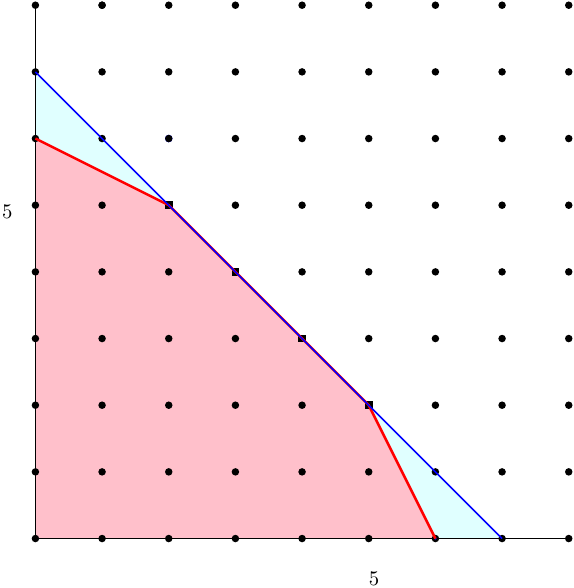}
  \caption{}
  \label{fig:sub1}
  \end{subfigure}
  \hspace{1cm}
\begin{subfigure}{.4\textwidth}
\centering
    \includegraphics[scale=0.6]{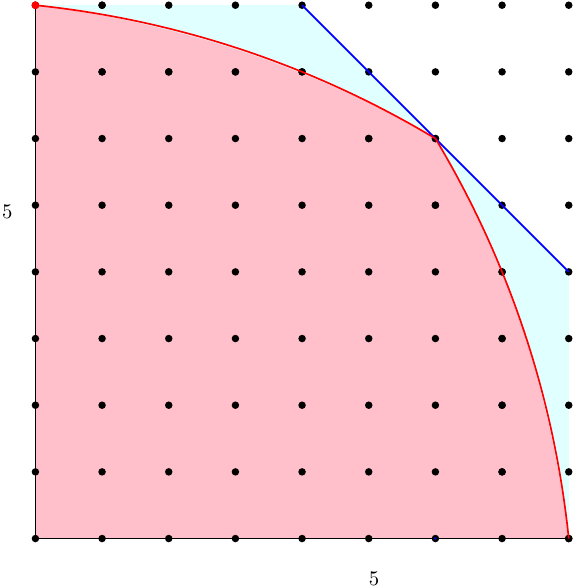}
    \caption{}
     \label{fig:sub2}
  \end{subfigure}
  \caption{(A) The lattice points under the red curve define $\mathrm{Hyp}_9(27,2)$. The lattice points under the blue curve define $\mathrm{RM}_9(7,2)$, the smallest Reed-Muller code that contains $\mathrm{Hyp}_9(27,2)$.
(B) The lattice points under the red curve define $\mathrm{Hyp}_9(9,2)$. The lattice points under the blue curve define $\mathrm{RM}_9(12,2)$, the smallest Reed-Muller code that contains $\mathrm{Hyp}_9(9,2)$.}
\end{figure}

\section{The largest Reed-Muller code} \label{secC4}
Given the hyperbolic code over $\mathbb F_q$ of degree $d$ with $m$ variables $\mathrm{Hyp}_q(d,m)$,
we now find the largest degree $s$ such that $\mathrm{RM}_q(s,m)\subseteq \mathrm{Hyp}_q(d,m).$
We first recall the dimension of a Reed-Muller code.
\begin{proposition} (\cite{G08})\label{21.04.02}
Take $s\leq (q-1)m$. Write $s=t(q-1)+r,$ where $t,r \in \mathbb{N}$ and $0\leq r < q-1.$ The minimum distance of the Reed-Muller code $\mathrm{RM}_q(s,m)$ is $\delta = (q-r)q^{m-1-t}$.
\end{proposition}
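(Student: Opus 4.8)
My plan is to bracket the minimum distance $\delta := \delta(\mathrm{RM}_q(s,m))$ between two bounds that will meet: the footprint bound from below, giving $\delta \ge \mathrm{FB}(\mathrm{RM}_q(s,m))$, and the Hamming weight of one explicitly chosen codeword from above, giving $\delta \le (q-r)q^{m-1-t}$. Writing $\mathrm{RM}_q(s,m) = \mathcal C_R$ with $R = \{\mathbf i \in \lii 0,q-1\rii^m : i_1 + \cdots + i_m \le s\}$, the crux is to show $\mathrm{FB}(\mathcal C_R) = \min_{\mathbf i \in R}(q-i_1)\cdots(q-i_m) = (q-r)q^{m-1-t}$, and I expect this optimization to be the only genuinely nontrivial step.

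To evaluate that minimum I would run an exchange argument on a hypothetical minimizer $\mathbf i$. Two moves each keep $\mathbf i$ in $R$ and strictly decrease the positive integer $\prod_j(q-i_j)$: (i) if $i_1+\cdots+i_m < s$ then some coordinate satisfies $i_a \le q-2$ (as $s \le (q-1)m$), and incrementing it scales the product by $\tfrac{q-i_a-1}{q-i_a} < 1$; (ii) if two coordinates satisfy $1 \le i_a \le i_b \le q-2$, replacing $(i_a,i_b)$ by $(i_a-1,i_b+1)$ fixes the coordinate sum and scales the product by $\tfrac{(q-i_a+1)(q-i_b-1)}{(q-i_a)(q-i_b)} = 1 - \tfrac{(q-i_a)-(q-i_b)+1}{(q-i_a)(q-i_b)} < 1$, using $q-i_a \ge q-i_b$. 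Since the value strictly decreases and stays a positive integer, the process terminates, and at a terminal point neither move applies: the coordinate sum equals $s$ and at most one coordinate lies in $\lii 1,q-2\rii$. So the terminal configuration has $t$ coordinates equal to $q-1$, one equal to $r$, and the rest equal to $0$, whence $\mathrm{FB}(\mathcal C_R) = 1^{t}(q-r)q^{m-1-t} = (q-r)q^{m-1-t}$, with the degenerate reading $q\cdot q^{-1}=1$ in the boundary case $t=m$ (which forces $r=0$). Equivalently, $\mathbf x \mapsto \sum_j\log(q-x_j)$ is concave, so its minimum over the polytope $\{\mathbf x: x_1+\cdots+x_m = s,\ 0\le x_j\le q-1\}$ is attained at a vertex, and the vertices are exactly those configurations. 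The footprint bound then gives $\delta \ge (q-r)q^{m-1-t}$.

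For the matching upper bound I would take pairwise distinct $\alpha_{i,1},\ldots,\alpha_{i,q-1}\in\mathbb F_q$ for $i=1,\ldots,t$ and pairwise distinct $\beta_1,\ldots,\beta_r\in\mathbb F_q$, and set $f = \big(\prod_{i=1}^{t}\prod_{\ell=1}^{q-1}(X_i-\alpha_{i,\ell})\big)\big(\prod_{\ell=1}^{r}(X_{t+1}-\beta_\ell)\big)$ (when $t=m$, so $r=0$, simply $f=\prod_{i=1}^{m}\prod_{\ell=1}^{q-1}(X_i-\alpha_{i,\ell})$). Then $\deg f = t(q-1)+r = s$ and $\deg_{X_j}f\le q-1$ for every $j$, so $\mathrm{ev}_{\mathcal P}(f)\in\mathrm{RM}_q(s,m)$; moreover $f(p_1,\ldots,p_m)\neq 0$ exactly when $p_i$ is the unique element of $\mathbb F_q\setminus\{\alpha_{i,1},\ldots,\alpha_{i,q-1}\}$ for each $i\le t$, $p_{t+1}\notin\{\beta_1,\ldots,\beta_r\}$, and $p_{t+2},\ldots,p_m$ are free, so $\mathrm{w}_H(\mathrm{ev}_{\mathcal P}(f)) = 1^{t}\cdot(q-r)\cdot q^{m-1-t} = (q-r)q^{m-1-t}$. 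Combining the two bounds, $\delta \le (q-r)q^{m-1-t} = \mathrm{FB}(\mathcal C_R) \le \delta$, so equality holds throughout and the formula follows.

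The main obstacle is thus the evaluation of $\mathrm{FB}(\mathcal C_R)$, which the exchange argument (or the concavity observation) settles, with a little bookkeeping for the boundary case $t=m$. If one prefers to bypass the footprint bound, the lower bound can instead be obtained by induction on $m$: write a nonzero $f$ of degree $\le s$ as $f=\sum_i X_1^i g_i(X_2,\ldots,X_m)$ with $d=\deg_{X_1}f$ and leading coefficient $g_d\neq 0$ of degree $\le s-d$; on each of the at least $\delta(\mathrm{RM}_q(s-d,m-1))$ points $(a_2,\ldots,a_m)$ where $g_d$ is nonzero (using the inductive hypothesis) the slice $f(X_1,a_2,\ldots,a_m)$ has degree exactly $d$ in $X_1$, hence at least $q-d$ nonzero values; multiplying these counts and optimizing over $0\le d\le q-1$ (with a case split according to whether $d\le r$) again yields $(q-r)q^{m-1-t}$.
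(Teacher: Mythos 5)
Your proof is correct. Note, however, that the paper does not prove this proposition at all: it is quoted from the literature (Geil, \cite{G08}), so there is no internal argument to compare yours against. What you supply is a clean self-contained derivation: the lower bound via the footprint bound together with an exchange (or concavity/vertex) argument showing that the minimum of $\prod_j(q-i_j)$ over $\{\mathbf i\in\lii 0,q-1\rii^m:\sum_j i_j\le s\}$ is attained at the configuration with $t$ coordinates equal to $q-1$, one equal to $r$, and the rest $0$, hence equals $(q-r)q^{m-1-t}$; and the matching upper bound via the explicit codeword $\mathrm{ev}_{\mathcal P}(f)$ of weight exactly $(q-r)q^{m-1-t}$. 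Both halves check out, including the boundary case $t=m$, $r=0$, and the small bookkeeping step (that a terminal configuration of your exchange process must have exactly $t$ coordinates equal to $q-1$, forced by $0\le r<q-1$) is easily justified. Your argument also dovetails with the paper's framework, since it in particular re-verifies the paper's standing claim that for Reed--Muller codes the minimum distance meets the footprint bound, which is exactly what the proof of Theorem~\ref{20.08.16} relies on; the citation route the authors take simply outsources this computation, while your proof makes the paper self-contained at the cost of a page of elementary optimization.
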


\begin{proposition}\label{22.11.33}
Let $d \in \mathbb{Z}_{\geq 1}.$ The minimum distance $\delta\left(\mathrm{RM}_q(s,m)\right) \geq d$ if and only if $$s \leq (m - c)(q-1) + q - \left\lceil \frac{d}{q^{c-1}} \right\rceil\hbox{, where }c := \left\lceil \log_q(d) \right\rceil.$$
\end{proposition}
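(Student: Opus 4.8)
The statement to prove is: $\delta(\mathrm{RM}_q(s,m)) \geq d$ iff $s \leq (m-c)(q-1) + q - \lceil d/q^{c-1}\rceil$, where $c = \lceil \log_q(d)\rceil$. The natural approach is to invert the formula for the minimum distance of Reed-Muller codes given in Proposition~\ref{21.04.02}: writing $s = t(q-1) + r$ with $0 \leq r < q-1$, we have $\delta(\mathrm{RM}_q(s,m)) = (q-r)q^{m-1-t}$. So the plan is to determine, among all valid $(t,r)$, the largest $s = t(q-1)+r$ for which $(q-r)q^{m-1-t} \geq d$, and then check that this largest $s$ is exactly the claimed expression.

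First I would establish that $c = \lceil \log_q d\rceil$ is the right ``block index''. The quantity $(q-r)q^{m-1-t}$ is a decreasing step function of $s$; as $s$ ranges over a full block $\lii t(q-1), (t+1)(q-1)-1\rii$ (i.e.\ $r$ from $0$ to $q-2$), the distance $(q-r)q^{m-1-t}$ decreases from $q^{m-t}$ down to $2q^{m-1-t}$, and the next block starts at distance $q^{m-1-t}$. So the values $\geq d$ cut off inside a specific block. The key computation: we need $q^{m-1-t}(q-r) \geq d$. The plan is to show the optimal $t$ is $t^* = m - c$, i.e.\ the first $t$ for which $q^{m-t} = q^{c} \geq d$ is not yet enough to guarantee the whole block works but the block does contain admissible $s$; here one uses $q^{c-1} < d \leq q^c$ from the definition of $c$ (with the mild caveat that $d$ could be an exact power of $q$, which I would handle separately or absorb by noting $\lceil\log_q d\rceil$ still behaves correctly). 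With $t = m-c$ fixed, the condition $q^{c-1}(q-r)\geq d$ becomes $q - r \geq d/q^{c-1}$, i.e.\ $r \leq q - \lceil d/q^{c-1}\rceil$ since $r$ is an integer. Then the largest admissible $s$ in this block is $s = (m-c)(q-1) + q - \lceil d/q^{c-1}\rceil$, which is exactly the claimed bound.

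The remaining obligation is to verify two things: (i) that no larger $t$ (i.e.\ a smaller-distance block) could give a larger $s$ — but that is automatic since $s$ is increasing across blocks and we have taken the maximal $r$ in the last block that still meets the threshold, while the next block starts below $d$; and (ii) that $r^* := q - \lceil d/q^{c-1}\rceil$ indeed lies in the valid range $0 \leq r^* \leq q-2$ (or $q-1$, handling the edge case where the entire block qualifies, which corresponds to $d \leq q^{c-1}\cdot 2$, pushing us to reconsider $c$). Concretely, $r^* \geq 0$ follows from $d \leq q^c$ so $\lceil d/q^{c-1}\rceil \leq q$; and $r^* \leq q-2$ needs $\lceil d/q^{c-1}\rceil \geq 2$, i.e.\ $d > q^{c-1}$, which holds by the definition of $c$ (when $d$ is not a pure power; the boundary $d = q^{c-1}$ cannot occur since then $\lceil\log_q d\rceil = c-1 \neq c$). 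One should also confirm $m - c \geq 0$, i.e.\ $d \leq q^m$, which is the natural hypothesis for the code to have positive dimension (otherwise $\mathrm{RM}_q(s,m)$ never has distance $\geq d$ and the inequality is vacuous/negative).

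**Main obstacle.** The conceptual content is routine inversion of a step function, so the real work is bookkeeping: carefully managing the boundary cases (when $d$ is an exact power of $q$, when $r^*$ hits the top of a block so that really the previous block's top is the answer, and when $c = m$ or the bound becomes negative). I expect the cleanest write-up to argue directly: show (a) $s = (m-c)(q-1) + q - \lceil d/q^{c-1}\rceil$ satisfies $\delta(\mathrm{RM}_q(s,m))\geq d$ by plugging into Proposition~\ref{21.04.02} and using $\lceil d/q^{c-1}\rceil \cdot q^{c-1} \geq d$; and (b) $s+1$ fails, by checking that incrementing $s$ either increases $r$ by one (dropping the distance below $d$ because $(q - (r^*+1))q^{c-1} < d$, which is where the ceiling's definition bites) or rolls over to a new block with distance $q^{c-1} < d$. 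Steps (a) and (b) together give the ``iff'' via monotonicity of $\delta$ in $s$.
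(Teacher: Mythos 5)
Your plan is correct and follows essentially the same route as the paper: both invert the distance formula of Proposition~\ref{21.04.02} by writing $s=t(q-1)+r$, deduce $t\leq m-c$ from $q^{m-t}\geq d$, and for $t=m-c$ translate $(q-r)q^{c-1}\geq d$ into $r\leq q-\left\lceil d/q^{c-1}\right\rceil$. Your extra bookkeeping (monotonicity of $\delta$ in $s$, the range of $r^{*}$, and the case $d$ a power of $q$) only makes explicit what the paper's terse converse direction leaves implicit, so no new idea is needed.
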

\begin{proof}
Write $s = t (q-1) + r,$ where $t , r \in \mathbb{N}$ and $0 \leq r < q-1$. By Proposition~\ref{21.04.02},
\[ \delta \left(\mathrm{RM}_q(s,m)\right) = (q-r) q^{m-1-t}.\]

$(\Rightarrow)$ Assume that  $\delta(\mathrm{RM}_q(s,m)) \geq d.$ Then we have that $q^{m-t} \geq  (q-r) q^{m-1-t} \geq d.$ Because the logarithmic properties, $m-t \geq \left\lceil \log_q(d) \right\rceil$ and $t \leq m - \left\lceil \log_q(d) \right\rceil = m - c$. For $t < m-c$ we have that $s < (t+1)(q-1) \leq (m-c)(q-1)$ and the result follows.  Moreover, for $t = m - c,$ then $q-r \geq \left\lceil d/q^{m-1-t} \right\rceil = \left\lceil d/q^{c - 1} \right\rceil$ and, hence, $r \leq  q - \left\lceil \frac{d}{q^{c-1}} \right\rceil$. 
Putting all together we have that $s \leq (m - c)(q-1) + q - \left\lceil \frac{d}{q^{c-1}} \right\rceil.$

$(\Leftarrow)$ Conversely, if $s < (m - c)(q-1) + q - \left\lceil \frac{d}{q^{c-1}} \right\rceil,$ then $\delta(\mathrm{RM}_q(s,m)) \geq d$.
\end{proof}

We come to one of the main results of this section, which helps to find the largest Reed-Muller code inside of a hyperbolic code.
\begin{theorem}
\label{cor:highest}
Let $d \in \mathbb{Z}_{\geq 1}$. Then $\mathrm{RM}_q(s,m) \subseteq \mathrm{Hyp}_q(d,m)$ if and only if 
\[s \leq (m - c)(q-1) + q - \left\lceil \frac{d}{q^{c-1}} \right\rceil\hbox{, where }c := \left\lceil \log_q(d) \right\rceil.\]
\end{theorem}
\begin{proof}
This is a direct consequence of Proposition~\ref{22.11.33}.
\end{proof}
\begin{example}\label{22.11.28}
The lattice points under the red curve of Figure~\ref{fig:sub5} define the hyperbolic code $\mathrm{Hyp}_9(27,2)$. By Theorem~\ref{cor:highest}, we have that $\mathrm{RM}_9(s,2) \subseteq \mathrm{Hyp}_9(27,2) $ when $s\leq 6$. The lattice points under the black curve of Figure~\ref{fig:sub5} define the Reed-Muller hyperbolic code $\mathrm{RM}_9(6,2)$, which is the largest Reed-Muller in $\mathrm{Hyp}_9(27,2)$.
\end{example}

\begin{example}\label{22.11.29}
The lattice points under the red curve of Figure~\ref{fig:sub6} define the hyperbolic code $\mathrm{Hyp}_9(9,2)$. By Theorem~\ref{cor:highest}, we have that $\mathrm{RM}_9(s,2) \subseteq \mathrm{Hyp}_9(9,2)$ when $s\leq 8$. The lattice points under the black curve of Figure~\ref{fig:sub6} define the Reed-Muller hyperbolic code $\mathrm{RM}_9(8,2)$, which is the largest Reed-Muller in $\mathrm{Hyp}_9(9,2)$.
\end{example}

\begin{figure}[h]
\centering
\begin{subfigure}{.4\textwidth}
\centering
  \includegraphics[scale=0.6]{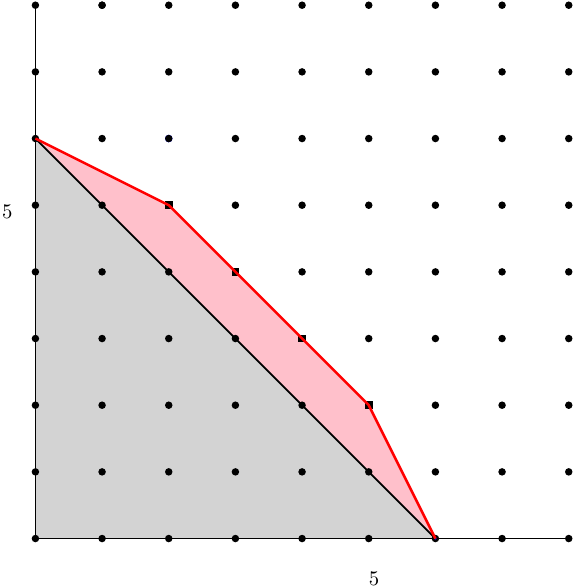}
  \caption{}
  \label{fig:sub5}
  \end{subfigure}
  \hspace{1cm}
\begin{subfigure}{.4\textwidth}
\centering
    \includegraphics[scale=0.6]{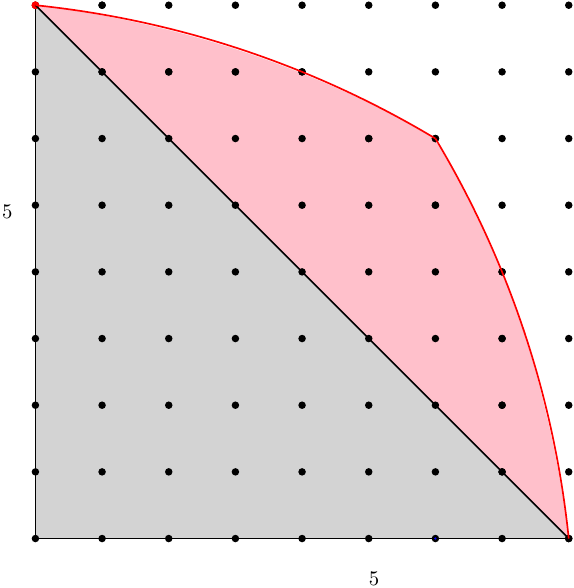}
    \caption{}
     \label{fig:sub6}
  \end{subfigure}
  \caption{(A) The lattice points under the red curve define $\mathrm{Hyp}_9(27,2)$. The lattice points under the black curve define $\mathrm{RM}_9(6,2)$, the largest Reed-Muller code in $\mathrm{Hyp}_9(27,2)$. 
(B) The lattice points under the red curve define $\mathrm{Hyp}_9(9,2)$. The lattice points under the black curve define $\mathrm{RM}_9(8,2)$, the largest Reed-Muller in $\mathrm{Hyp}_9(9,2)$.}
\end{figure}

\section{Generalized Hamming weights}\label{21.04.04}
This section proves that, similarly to Reed-Muller and Cartesian codes, the $r$-th generalized Hamming weight and the $r$-th footprint of the hyperbolic code coincide. Unlike Reed-Muller~\cite{HeiPel} and Cartesian~\cite{BEELEN2018130}, determining the $r$-th footprint of a hyperbolic code is still an open problem. We give upper and lower bounds for the $r$-th footprint of a hyperbolic code $\mathrm{Hyp}_q(d,m)$ in terms of the largest Reed-Muller code $\mathrm{RM}_q(s,m)$ contained in $\mathrm{Hyp}_q(d,m)$ and the smallest Reed-Muller code $\mathrm{RM}_q(s^\prime,m)$ that contains $\mathrm{Hyp}_q(d,m)$. These bounds sometimes are sharp.

Recall that the monomial code associated with $A \subseteq \lii 0, q-1\rii ^m$ is given by
\[\mathcal C_A = \mathrm{ev}_{\mathcal P} (\mathbb F_q[A]) = \left\{\mathrm{ev}_{\mathcal P}(f) \ : \ f\in \mathbb F_q[A] \right\}.\]
For and an integer $1 \leq r \leq |A|$, the $r$-th generalized Hamming weight of $\mathcal{C}_A$ is given by
\[\delta_r(\mathcal C_A) = \min \{ \ |\chi(\mathcal D)| \ : \ \mathcal D \subseteq \mathcal C_A, \ \dim (\mathcal D) = r \},\]
where $\chi(\mathcal D) := \left\{i \in [n] \ | \ \text{there is } \mathbf x \in \mathcal D \text{ with } x_i \neq 0\right\}.$
We now explain how to bound the $r$-th generalized Hamming weight in terms of the footprint.
For $\mathbf i = (i_1, \ldots, i_m) \in \lii 0,q-1\rii^m$, we define the set
\[\nabla(\mathbf{i}):=\lii i_1,q-1\rii \times \cdots \times \lii i_m,q-1\rii .\]
Note that $|\nabla(\mathbf{i})|=\prod_{j=1}^m (q-i_j)$.
We can rewrite the footprint-bound of the code $C_A$ as
\[ \mathrm{FB}(C_A) =  \min \left\{ |\nabla(\mathbf{i})| \ : \ \mathbf{i} \in A \right\}.\]
The minimum distance $\delta(\mathcal C_A)$ of $\mathcal C_A$ is lower bounded by the footprint bound~\cite{GH00}:
$\mathrm{FB}(\mathcal C_A) \leq \delta(\mathcal C_A)$.
Jaramillo et al. generalized in~\cite{JVV} the footprint-bound to the {\it $r$-th footprint}:
\[ \mathrm{FB}_r(\mathcal C_A) :=
\min \left\{ \middle| \bigcup_{j=1}^r \nabla(\mathbf{i}_j) \middle| \ : \
\mathbf{i}_j \in A, \ \mathbf{i}_\ell \neq \mathbf{i}_j \text{ for } \ell, j \in \lii 1, m \rii \right\}.\]
Similarly to the minimum distance, the $r$-th generalized Hamming weight is lower bounded by the $r$-th footprint~\cite[Theorem 3.9]{JVV}:
\begin{equation}\label{22.11.27}
\mathrm{FB}_r(\mathcal C_A) \leq \delta_r(\mathcal C_A).
\end{equation}

The $r$-th footprint is sharp for Reed-Muller and Cartesian codes by~\cite{HeiPel} and~\cite{BEELEN2018130}, respectively. We now extend the result by proving that the $r$-th footprint is sharp for hyperbolic codes. Recall that the hyperbolic code $\mathrm{Hyp}_q(d,m)$ depends on the set
\[H=\left\{ \mathbf{i}  \in \lii 0,q-1\rii^m \ : \ |\nabla(\mathbf{i})| \geq d \right\}.\]
We come to one of the main results of this section.
\begin{theorem}\label{22.11.15}
The $r$-th generalized Hamming weight of a hyperbolic code $\mathrm{Hyp}_q(d,m)$ is given by the
$r$-th footprint:
\[\delta_r(\mathrm{Hyp}_q(d,m)) = \mathrm{FB}_r(\mathrm{Hyp}_q(d,m)) :=
\min \left\{ \middle| \bigcup_{j=1}^r \nabla(\mathbf{i}_j) \middle| \ : \
\mathbf{i}_j \in H, \ \mathbf{i}_\ell \neq \mathbf{i}_j \text{ for } \ell, j \in \lii 1, m \rii \right\}.\]
\end{theorem}
\begin{proof}
By Equation~(\ref{22.11.27}), $\mathrm{FB}_r(\mathrm{Hyp}_q(d,m)) \leq \delta_r(\mathrm{Hyp}_q(d,m))$.

To prove the inequality $\delta_r(\mathrm{Hyp}_q(d,m)) \leq \mathrm{FB}_r(\mathrm{Hyp}_q(d,m))$, we construct $r$ elements in $\mathrm{Hyp}_q(d,m)$ that generate a subspace in $\mathrm{Hyp}_q(d,m)$ of dimension $r$ and support length precisely $\mathrm{FB}_r(\mathrm{Hyp}_q(d,m))$. Let $\gamma$ be a primitive element of $\mathbb{F}_q$. For an nonnegative integer $\ell$, we define the polynomial $f(\ell,x)$ in $\mathbb{F}_q[x]$ of degree $\ell$ as
\[f(\ell,x):=\begin{cases}
      1 & \text{if $\ell = 0$}\\
      x & \text{if $\ell = 1$}\\
      (x)(x-\gamma)\cdots (x-\gamma^{\ell-1}) & \text{if $\ell > 1$.}
\end{cases}\]
Let $\mathbf{i}_1,\ldots,\mathbf{i}_r$ be elements in $H$ such that $\mathrm{FB}_r(\mathrm{Hyp}_q(d,m))=\left|\bigcup_{j=1}^r \nabla(\mathbf{i}_j)\right|$. For every $1\leq j\leq r$, assume $\mathbf{i}_j=(i_{j1},\ldots,i_{jm})$, and define the polynomial
\[f_j := f(i_{j1},x_1)\cdots f(i_{jm},x_m).\]
Denote by $Z(f_j)$ the set of zeros of $f_j$ in $\mathbb{F}_q^m$. Note that
\[\mathbb{F}_q^m\setminus Z(f_j) =
\left\{ \left(\gamma^{a_1},\ldots,\gamma^{a_m}\right) \in \mathbb{F}_q^m \ : \ (a_1,\ldots,a_m) \in \nabla(\mathbf{i}_j) \right\},\]
which implies that $ev_{\mathbb{F}_q^m}(f_j)\in\mathrm{Hyp}_q(d,m)$.
Let $Z(f_1,\ldots,f_r)$ be the set of common zeros of $f_1,\ldots,f_r$ in $\mathbb{F}_q^m$.
As $Z(f_1,\ldots,f_r)=\bigcap_{j=1}^r Z(f_j)$, then
$\mathbb{F}_q^m\setminus Z(f_1,\ldots,f_r) = \bigcup_{j=1}^r \left( \mathbb{F}_q^m\setminus Z(f_j)\right)$. Thus, 
if $\mathcal{D}_r:= \text{Span}_{\mathbb{F}_q}\left\{ev_{\mathbb{F}_q^m}(f_1), \ldots, ev_{\mathbb{F}_q^m}(f_r)\right\} \subseteq
\mathrm{Hyp}_q(d,m),$ then
\[|\chi( \mathcal D_r)| = |\mathbb{F}_q^m\setminus Z(f_1,\ldots,f_r)| = \left|\bigcup_{j=1}^r \nabla(\mathbf{i}_j)\right| =
 \mathrm{FB}_r(\mathrm{Hyp}_q(d,m)).\]
We conclude that $\delta_r(\mathrm{Hyp}_q(d,m))\leq\mathrm{FB}_r(\mathrm{Hyp}_q(d,m)).$
\end{proof}
We now use Theorem~\ref{22.11.15} to bound the GHWs of hyperbolic codes in terms of the $r$-th footprint.
\begin{corollary}\label{22.11.20}
Let $\mathbf{i}_1,\ldots,\mathbf{i}_r\in H$ be the first $r$ elements of $H$ in descending lexicographical order. Then
\[\delta_r(\mathrm{Hyp}_q(d,m))\leq\left|\bigcup_{j=1}^r \nabla(\mathbf{i}_j)\right|.\]
\end{corollary}
\begin{proof}
This is a direct consequence of Theorem~\ref{22.11.15}.
\end{proof}
Heijnen and Pellikaan proved in~\cite[Theorem 5.10]{HeiPel} that the bound of Corollary~\ref{22.11.20} is sharp for a Reed-Muller code. Even more, Heijnen and Pellikaan explicitly described the $r$-th generalized Hamming weight in terms of the $r$-th element in $\lii 0,q-1\rii^m$ in the lexicographic order.
Note that Theorem~\ref{22.11.15} gives an expression to compute the GHWs of a hyperbolic code in terms of finding a minimum on a set. Naturally, when the hyperbolic code coincides with a Reed-Muller code, we obtain a close formula for the GHWs of some hyperbolic codes.
\begin{theorem}
Take $m \geq 1$. Let $d$ be such that $(q-t-1)^r(q-t)^{m-r}<d,$ where $s+1=mt+r$, $0 \leq s < m(q-1)$,  and $0\leq r < m$.
The $r$-th generalized Hamming weight of the hyperbolic code $\mathrm{Hyp}_q(d,m)$ is given by:
\[\delta_r(\mathrm{Hyp}_q(d,m)) = \sum_{i=1}^ma_{m-i+1}q^{i-1}+1,\]
where $\mathbf{a}=(a_1,\ldots,a_m)$ is the $r$-th element in $\lii 0,q-1\rii^m$ in the lexicographic order with the property that $\deg(\mathbf{a}) > (q-1)m-s-1$.
\end{theorem}
\begin{proof}
By Theorem~\ref{20.08.16}, the hyperbolic code $\mathrm{Hyp}_q(d,m)$ coincides with the Reem-Muller code $\mathrm{RM}_q(s,m)$.
By~\cite[Theorem 5.10]{HeiPel}, the $r$-th generalized Hamming weight is given by
$\delta_r(\mathrm{RM}_q(s,m)) = \sum_{i=1}^ma_{m-i+1}q^{i-1}+1.$
\end{proof}

We also have the following bounds for the GHWs of an arbitrary hyperbolic code $\mathrm{Hyp}_q(d,m)$ in terms of the GHWs of some Reed-Muller codes.
\begin{corollary}
Let $\mathrm{Hyp}_q(d,m)$ be a hyperbolic code.
Define $s^\prime := m\left \lfloor a\right \rfloor + r$, where $a=q-\sqrt[m]{d}$ and
$r=\left \lfloor \frac{m\log\left( \frac{q-a}{q-\left \lfloor a \right \rfloor}\right)}{\log \left( \frac{q-\left \lfloor a\right \rfloor -1}{q-\left \lfloor a \right \rfloor}\right)}\right \rfloor$.
Let $s$ be the maximum integer such that
$s \leq (m - c)(q-1) + q - \left\lceil \frac{d}{q^{c-1}} \right\rceil$, where $c := \left\lceil \log_q(d) \right\rceil.$
Then
\[\delta_r(\mathrm{RM}_q(s^\prime,m)) \leq \delta_r(\mathrm{Hyp}_q(d,m)) \leq \delta_r(\mathrm{RM}_q(s,m)),\]
where the first inequality is valid for any $1\leq r \leq \dim ({\rm RM}_q(s,m))$ and
the second inequality is true for any $1\leq r \leq \dim (\mathrm{Hyp}_q(d,m))$.
\end{corollary}
\begin{proof}
The result follows from Theorems~\ref{Prop:smaller-m} and~\ref{cor:highest}, where we prove
$\mathrm{RM}_q(s,m) \subseteq \mathrm{Hyp}_q(d,m)\subseteq \mathrm{RM}_q(s^\prime,m).$
\end{proof}

\begin{example}\label{22.11.24}
From Examples~\ref{22.11.25} and~\ref{22.11.28}, we have that
$\mathrm{RM}_9(6,2)\subseteq\mathrm{Hyp}_9(27,2)\subseteq \mathrm{RM}_9(7,2)$.
 Thus, 
\[ 26 = \delta_2(\mathrm{RM}_9(7,2)) \leq \delta_2(\mathrm{Hyp}_9(27,2)) \leq \delta_2(\mathrm{RM}_9(6,2)) = 35.\] 
Using computational software and Theorem~\ref{22.11.15}, we can see that the actual value is
$\delta_2(\mathrm{Hyp}_9(27,2)) =  {\rm FB}_2(\mathrm{Hyp}_9(27,2)) = 32$ (see Figure \ref{fig:ghw1}).
\end{example}
\begin{figure}[h]
\centering
\begin{subfigure}{.4\textwidth}
\centering
  \includegraphics[scale=0.6]{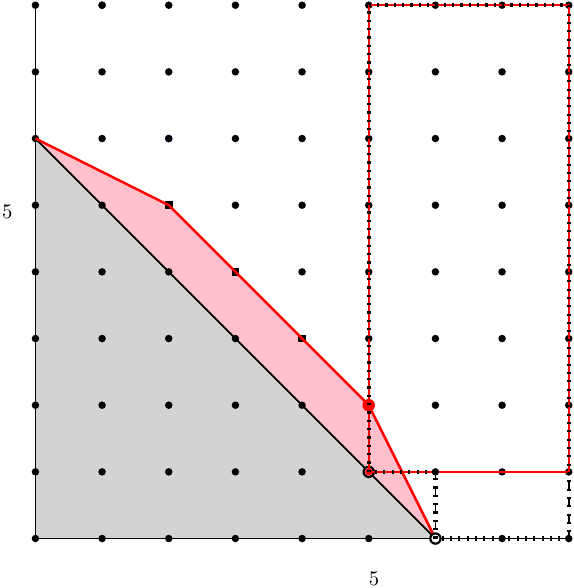}
  \caption{}
  \label{fig:sub3}
  \end{subfigure}
  \hspace{1cm}
\begin{subfigure}{.4\textwidth}
\centering
    \includegraphics[scale=0.6]{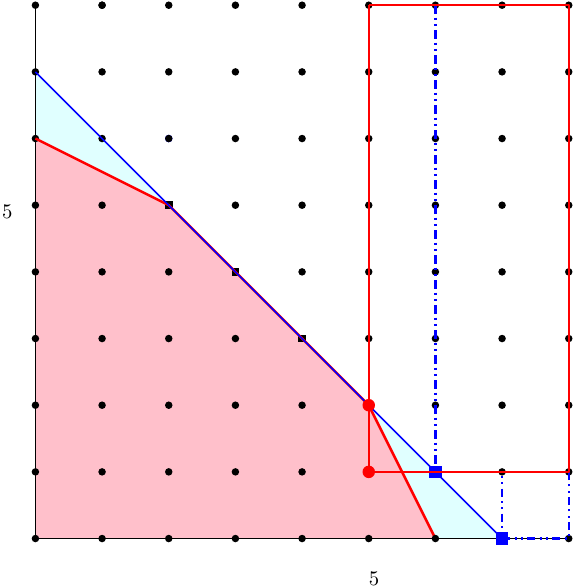}
    \caption{}
  \end{subfigure}
  \caption{We observe that $\mathrm{RM}_9(6,2)\subseteq\mathrm{Hyp}_9(27,2)\subseteq \mathrm{RM}_9(7,2)$. The boxes represent the lattice points that help to compute the second GHWs. The number of lattice points inside: the red box is equal to $\delta_2(\mathrm{Hyp}_9(27,2))$, the black box is equal to $\delta_2(\mathrm{RM}_9(6,2))$, and the blue box is equal to $\delta_2(\mathrm{RM}_9(7,2))$.}
      \label{fig:ghw1}
\end{figure}

\begin{example}
From Examples~\ref{22.11.26} and~\ref{22.11.29}, we have that
$\mathrm{RM}_9(8,2)\subseteq\mathrm{Hyp}_9(9,2)\subseteq \mathrm{RM}_9(12,2)$.
 Thus, 
\[ 9 = \delta_2(\mathrm{RM}_9(12,2)) \leq \delta_2(\mathrm{Hyp}_9(9,2)) \leq \delta_2(\mathrm{RM}_9(8,2)) = 17.\] 
Using computational software and Theorem~\ref{22.11.15}, we can see that the actual value is
$\delta_2(\mathrm{Hyp}_9(9,2)) =  {\rm FB}_2(\mathrm{Hyp}_9(9,2)) = 12$ (see Figure \ref{fig:ghw2}).
\end{example}
\begin{figure}[h]
\centering
\begin{subfigure}{.4\textwidth}
\centering
  \includegraphics[scale=0.6]{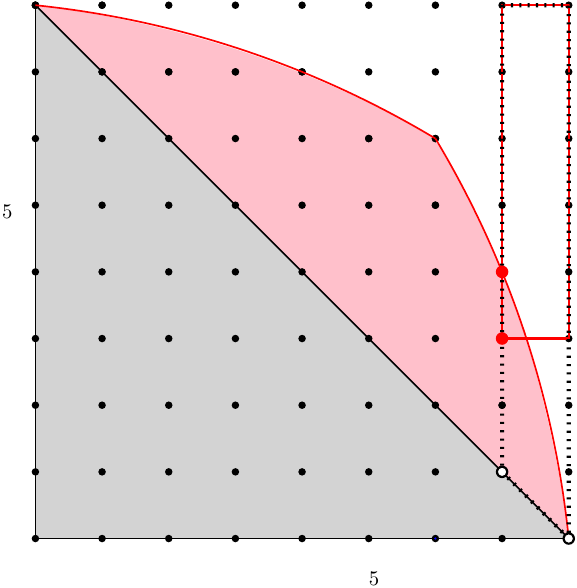}
  \caption{}
  \label{fig:sub3}
  \end{subfigure}
  \hspace{1cm}
\begin{subfigure}{.4\textwidth}
\centering
    \includegraphics[scale=0.6]{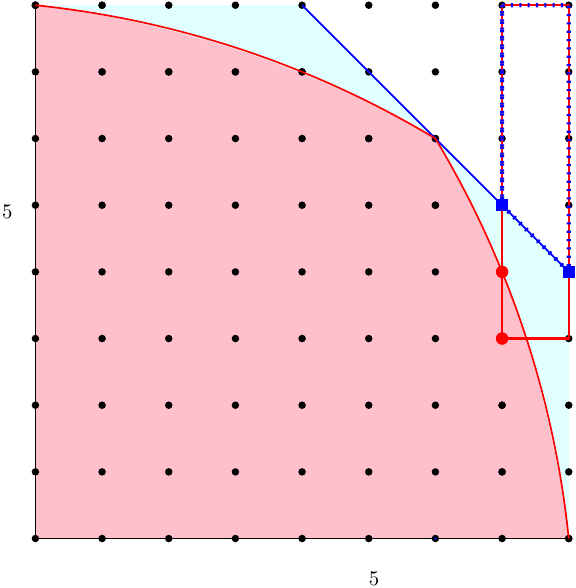}
    \caption{}
  \end{subfigure}
  \caption{We observe that $\mathrm{RM}_9(8,2)\subseteq\mathrm{Hyp}_9(9,2)\subseteq \mathrm{RM}_9(12,2)$. The boxes represent the lattice points that help to compute the second GHWs. The number of lattice points inside: the red box is equal to $\delta_2(\mathrm{Hyp}_9(9,2))$, the black box is equal to $\delta_2(\mathrm{RM}_9(8,2))$, and the blue box is equal to $\delta_2(\mathrm{RM}_9(12,2))$.}
      \label{fig:ghw2}
\end{figure}
\label{example-smallest}

The following example shows that the bounds of Corollary~\ref{22.11.20} may be sharp for some of the GHWs of a hyperbolic code.
\begin{example}
Let $q=9$ and $H=\{(i_1,i_2)\in \lii 0,8 \rii \ |\ (9-i_1)(9-i_2)\geq 27\}$.
By computational software, the first generalized Hamming weight, which is the minimum distance, is given by the first element of $H$ in descending lexicographical order. This first element is $(6,0)$. See Figure~\ref{22.11.30}. As $|\nabla(6,0)|=27$, we obtain $\delta_1(\mathrm{Hyp}_9(27,2))=27$. 
\begin{figure}[h]
\includegraphics[width=.3\textwidth]{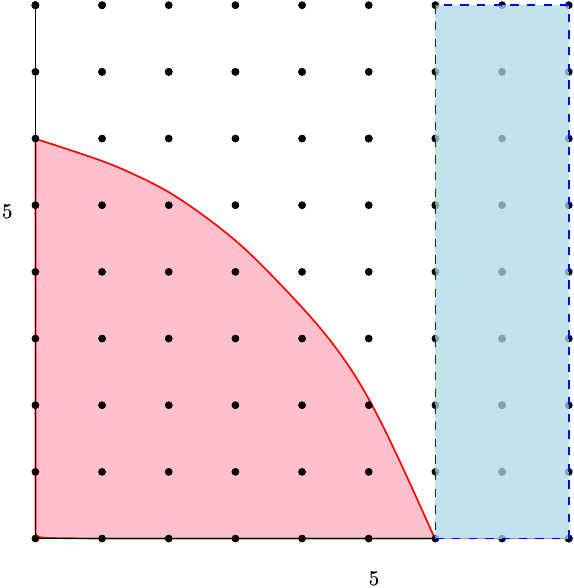}
\caption{The number of lattice points inside of the blue box equals $\delta_1(\mathrm{Hyp}_9(27,2))$.}
\label{22.11.30}
\end{figure}

The first two elements in descending lexicographical order in $H$ are $(6,0)$ and $(5,2)$. See Figure~\ref{22.11.31}.
We obtain $|\nabla(6,0)\cup\nabla(5,2)|=34$. However, $\delta_2(\mathrm{Hyp}_9(27,2)) = 32$ by Example~\ref{22.11.24}, which means that the first two elements do not give the second weight in descending lexicographical order.
\begin{figure}[h]
\includegraphics[width=.3\textwidth]{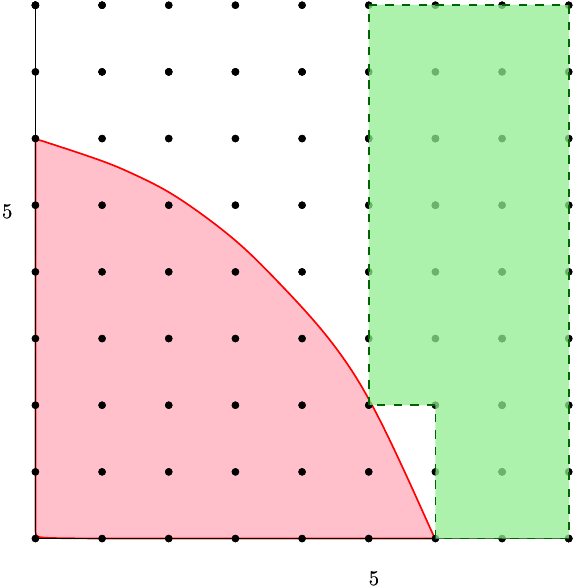}
\caption{The number of lattice points inside of the green box equals $\delta_2(\mathrm{Hyp}_9(27,2))$.}
\label{22.11.31}
\end{figure}

The first fourth elements in descending lexicographical order in $H$ are $(6,0), (5,2), (5,1),$ and $(5,0)$. See Figure~\ref{22.11.32}.
The first and fourth elements, respectively, give the third and fourth GHWs:
\[\delta_3(\mathrm{Hyp}_9(2,27))=|\nabla(6,0)\cup\nabla(5,2)\cup\nabla(5,1)|=35\]
Thus,
\[\delta_4(\mathrm{Hyp}_9(2,27))=|\nabla(6,0)\cup\nabla(5,2)\cup\nabla(5,1)\cup\nabla(5,0)|=36.\]
\begin{figure}[h]
\includegraphics[width=.3\textwidth]{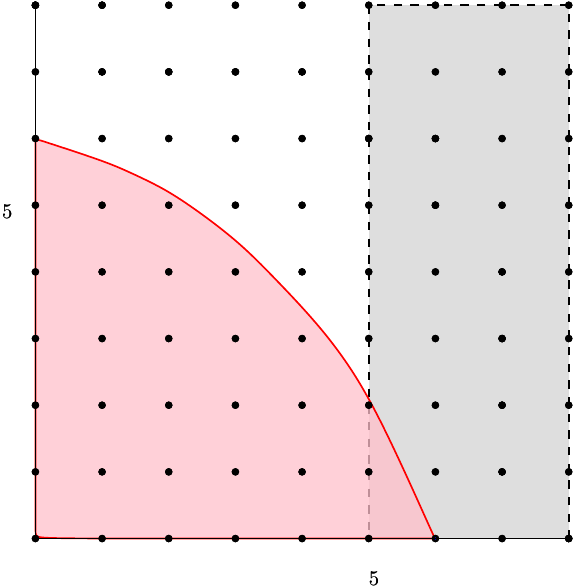}
\caption{The number of lattice points inside of the grey box equals
$\delta_4(\mathrm{Hyp}_9(2,27))=|\nabla(6,0)\cup\nabla(5,2)\cup\nabla(5,1)\cup\nabla(5,0)|=36$.}
\label{22.11.32}
\end{figure}
\end{example}

\section*{Acknowledgments} 
Part of this work was developed while E. Camps-Moreno visited Universidad de La Laguna.

\bibliography{codes}{}
\bibliographystyle{abbrv}

\end{document}